\newtheorem{proposition}{Proposition}
\newtheorem{definition}{Definition}
\newcommand\discuss[1]{\{\textbf{Discuss:} \textit{#1}\}}
\newcommand\todo[1]{\{\textbf{Todo:} \textit{#1}\}}
\newcommand\note[1]{\{\underline{Note:} \textit{#1}\}}
\newcommand{\mypara}[1]{\vspace*{0.06in}\noindent\textbf{#1}\xspace}
\newcommand{\difp}{differential privacy\xspace}
\newcommand{\km}{$k$-means\xspace}
\newcommand{\Lap}[1]{\ensuremath{\mathsf{Lap}\left(#1\right)}\xspace}
\newcommand{\Var}[1]{\ensuremath{\mathsf{Var}\left(#1\right)}\xspace}
\newcommand{\E}{\ensuremath{\mathsf{E}}\xspace}
\newcommand{\EE}[1]{\ensuremath{\mathsf{E}\left[#1\right]}\xspace}
\newcommand{\mse}[1]{\ensuremath{\mathsf{MSE}\left(#1\right)}\xspace}
\newcommand{\BI}[1]{\ensuremath{\mathsf{Bias}\left(#1\right)}}
\renewcommand{\Pr}[1]{\ensuremath{\mathsf{Pr}\left[#1\right]}\xspace}
\newcommand{\myexp}[1]{\ensuremath{e^{#1}}\xspace}
\newcommand{\MM}{\ensuremath{\mathcal{M}}\xspace}
\renewcommand{\AA}{\ensuremath{\mathcal{A}}\xspace}
\newcommand{\cj}{\ensuremath{\mathbf{o}^{j}}\xspace}
\newcommand{\CJ}{\ensuremath{C^{j}}\xspace}
\newcommand{\nicvminus}{NICV$^{-}$}
\newcommand{\ug}{\ensuremath{\mathsf{UG}}\xspace}
\newcommand{\eug}{\ensuremath{\mathsf{EUG}}\xspace}
\newcommand{\ag}{\ensuremath{\mathsf{AG}}\xspace}
\newcommand{\dpl}{DPLloyd\xspace}
\newcommand{\dplscopt}{DPLloyd-OPT\xspace}
\newcommand{\gupt}{GkM\xspace}
\newcommand{\privGene}{PGkM\xspace}
\newcommand{\gene}{Gene\xspace}
\newcommand{\augkm}{AUGkM\xspace}
\newcommand{\eugkm}{EUGkM\xspace}
\newcommand{\ugkm}{UGkM\xspace}
\newcommand{\mkm}{MkM\xspace}
\begin{document}
\title{Differentially Private $k$-Means Clustering}


\author{%
	{Dong Su$^{~\#}$, Jianneng Cao$^{~*}$, Ninghui Li$^{~\#}$, Elisa Bertino$^{~\#}$, Hongxia Jin$^{~\dagger}$}%
	\vspace{1.6mm}\\
	\fontsize{10}{10}\selectfont\itshape
	$^{\#}$\,Department of Computer Science, Purdue University\\
	\fontsize{9}{9}\selectfont\ttfamily\upshape
	\{su17, ninghui, bertino\}@cs.purdue.edu%
    \vspace{1.2mm}\\
    \fontsize{10}{10}\selectfont\rmfamily\itshape
    $^{*}$\,Institute for Infocomm Research, Singapore\\
    \fontsize{9}{9}\selectfont\ttfamily\upshape
    \,caojn@i2r.a-star.edu.sg
    \vspace{1.2mm}\\
    \fontsize{10}{10}\selectfont\rmfamily\itshape
    $^{\dagger}$\, Samsung Information Systems of America \\
    \fontsize{9}{9}\selectfont\ttfamily\upshape
    \,hongxia.jin@sisa.samsung.com
}

\maketitle

\begin{abstract}
There are two broad approaches for differentially private data analysis.  The \emph{interactive} approach aims at developing customized differentially private algorithms for various data mining tasks.
The \emph{non-interactive approach} aims at developing differentially private algorithms that can output a synopsis of the input dataset, which can then be used to support various data mining tasks.
In this paper we study the tradeoff of interactive vs.~non-interactive approaches and propose a \emph{hybrid} approach that combines interactive and non-interactive, using $k$-means clustering as an example.  In the hybrid approach to differentially private \km clustering, one first uses a non-interactive mechanism to publish a synopsis of the input dataset, then applies the standard \km clustering algorithm to learn $k$ cluster centroids, and finally uses an interactive approach to further improve these cluster centroids.  We analyze the error behavior of both non-interactive and interactive approaches and use such analysis to decide how to allocate privacy budget between the non-interactive step and the interactive step.
Results from extensive experiments support our analysis and demonstrate the effectiveness of our approach.
\end{abstract}

\sloppypar

\graphicspath{{figures/PINQ/}{figures/orig/}{figures/gowalla/}{figures/gowalla_3d/}{figures/s1/}{figures/GUPT/}{figures/hybrid/}{figures/3D/}{figures/equation-11-15/}{figures/varyN/}{figures/kmeans/}{figures/scalability/}{figures/find_opt_m/}{figures/vldb/}{figures/performance/}{figures/vldb/dplloyd}{figures/vldb}{figures/performance/}{figures/vldb/aug}}

\section{Introduction}\label{sec:intro}


In recent years, a large and growing body of literature has investigated differentially private data analysis. Broadly, they can be classified into two approaches. The \emph{interactive} approach aims at developing customized differentially private algorithms for specific data mining tasks.  One identifies the queries that need to be answered for the data mining task, analyze their sensitivity, and then answers them by adding appropriate noises.
The \emph{non-interactive approach} aims at developing an approach to compute, in a differentially private way, a synopsis of the input dataset, which can then be used to generate a synthetic dataset, or to directly support various data mining tasks.

An intriguing question is which of the two approaches is better?  Given an input dataset $D$, the desired privacy parameter $\epsilon$, which we refer to as the privacy budget, and one or more data analysis tasks, should one use the interactive approach or the non-interactive approach?  This question is largely open.
In general, the non-interactive approach has the advantage that once a synopsis is constructed, many analysis tasks can be conducted on the synopsis.  In contrast, using the interactive approach, one is limited to executing the interactive algorithm just once; any additional access to the dataset would violate differential privacy.  Therefore, strictly speaking, a dataset can serve only one analyst, and for only one task. (One could divide the privacy budget for multiple analysts and/or multiple tasks, but then the accuracy for each task will suffer.)  On the other hand, because the interactive approach is designed specifically for a particular data mining task, one might expect that, under the same privacy budget it should be able to produce more accurate results than the non-interactive approach.


In this paper we initiate the study of the tradeoff of interactive vs.~non-interactive approaches, using $k$-means clustering as the example.  Clustering analysis plays an essential role in data management tasks.  Clustering has also been used as a prime example to illustrate the effectiveness of interactive differentially private data analysis~\cite{BDMN05,Dwo11,PINQ,McSherry09,MTS+12,NRS07,ZXY+13}.
There are three state of the art interactive algorithms.  The first is the differentially private version of the Lloyd algorithm~\cite{BDMN05, McSherry09}, which we call \mbox{\dpl}.  The second algorithm uses the sample and aggregation framework~\cite{NRS07} and is implemented in the GUPT system~\cite{MTS+12}, which we call \gupt.  The third and most recent one, which we call \privGene, uses PrivGene~\cite{ZXY+13}, a framework for differentially private model fitting based on genetic algorithms.

To the best of our knowledge, performing $k$-means clustering using the non-interactive approach has not been explicitly proposed in the literature.  In this paper, we propose to combine the following non-interactive differentially private synopsis algorithms with \km clustering.  The dataset is viewed as a set of points over a $d$-dimensional domain, which is divided into $M$ equal-size cells, and a noisy count is obtained from each cell.  A key decision is to choose the parameter $M$.  A larger $M$ value means lower average counts for each cell, and therefore noisy counts are more likely to be dominated by noises.  A smaller $M$ value means larger cells, and therefore one has less accurate information of where the points are.  We propose a method that sets $M=\left(\frac{N\epsilon}{10}\right)^{\frac{2d}{2+d}}$, which is derived based on extending the analysis in~\cite{QYL12}, which aims to minimize errors when answering rectangular range queries for $2$-dimensional data, to higher dimensional case.  We call the resulting $k$-means algorithm EUGkM, where EUG is for Extended Uniform Grid.

We conducted extensive experimental evaluations for these algorithms on 6 external datasets and 81 datasets that we synthesized by varying the dimension $d$ from $2$ to $10$ and the number of clusters from $2$ to $10$.
Experimental results are quite interesting. \gupt was introduced after \dpl and was claimed to have accuracy advantage over \dpl, and \privGene was introduced after and compared \gupt.  However, we found that \dpl is the best method among these three methods.  In the comparison of \dpl and \gupt in~\cite{MTS+12}, \dpl was run using much larger number of iterations than necessary, and thus perform poorly.  In~\cite{ZXY+13}, \privGene was compared only with \gupt, and not with \dpl.  
More specifically, we found that \gupt is by far the worst among all methods.  Through experimental analysis of the sources of the errors, we found that it is possible to dramatically improve the accuracy of \gupt by choosing smaller partitions in the sample and aggregation framework.   After this improvement, \gupt becomes competitive with \privGene.  However, \dpl, the earliest method is clearly the best performing algorithm among the $3$ interactive algorithms.  Through analysis, we found that why \dpl outperforms \privGene.  
The genetic programming style \privGene needs more iterations to converge.  When making these algorithms differentially private, the privacy budget is divided among all iterations, thus having more iterations means more noise is added to each iteration.  Therefore, the more direct \dpl outperforms \privGene.



The most intriguing results are those comparing \dpl with EUGkM. For most datasets, EUGkM performs much better than \dpl.  For a few, they perform similarly, and for two datasets \dpl outperforms EUGkM.  Through further theoretical and empirical analysis, we found that while the performance of both algorithms are greatly affected by the two key parameters $d$ and $k$, they are affected differently by these two parameters.  \dpl scales worse when $k$ increases, while EUGkM scales worse when $d$ increases.  Again we use analysis to demonstrate why this is the case.


An intriguing question is can we further improve \dpl?  The accuracy of \dpl is affected by two key factors: the number of iterations and the choice of initial centroids.  In fact, these two are closely related.  If the initially chosen centroids are very good and close to the true centroids, one only needs perhaps one iteration to improve it, and this reduction in the number of iterations would mean little noise is added.  

This leads us to propose a novel hybrid method that combines non-interactive EUGkM with interactive \dpl.  We first use half the privacy budget to run EUGkM, and then use the centroids outputted by EUGkM as the initial centroids for one round of \dpl.  Such a method, however, may not actually outperform EUGkM, especially when the privacy budget $\epsilon$ is small, since then one round of \dpl may actually worsen the centroids.
We use our error analysis formulas to determine whether there is sufficient privacy budget for such a hybrid approach to outperform EUGKM.  We then experimentally validate the effectiveness of the Hybrid approach.

The hybrid idea is applicable to general private data analysis tasks which require parameter tuning.  In the no-privacy setting, one typically tunes parameters by building models for several parameters and selecting the one which offers the best utility.  Under the differential privacy setting, such kind of parameter tuning procedure does not work well since the limited privacy budget might be over-divided by trying many different parameters.  Chaudhuri et al.~\cite{CMS11} proposed a method for private parameter tuning by taking advantage of parallel composition.  The idea is to build private models with different parameters on separate subset of the dataset and evaluate models on a validation set.  The best parameter is chosen via exponential mechanism with quality function defined by the evaluation score.  However, this approach is also not scalable well over a large set of candidate parameters which might result each data block to have very small number of points and therefore lead to very inaccurate model.  Our proposed hybrid approach offers a better solution.  We can first publish private synopses of the input data, on which we try a large set of parameters.  Then, we run the interactive private analysis with the selected parameter on the input dataset to get the final result.

In this paper we advance the state of art on differentially private data mining in several ways.  First,
we have introduced non-interactive methods for differentially private $k$-means clustering, which are highly effective and often outperform state of the art interactive methods.  Second, we have extensively evaluated three interactive methods, and one non-interactive methods, and analyzed their strengths and weaknesses.  Third, we have developed techniques to analyze the error resulted from both \dpl and EUGkM.
Finally, we introduce the novel concept of hybrid approach to differentially private data analysis, which is so far the best approach to $k$-means clustering.  We conjecture that the concept of hybrid differential privacy approach may prove useful in other analysis tasks as well.

The rest of the paper is organized as follows.  In Section~\ref{sec:related}, we discuss related work.  In Section~\ref{sec:preliminaries}, we give preliminary information about differential privacy and \km clustering.  In Section~\ref{sec:existingApproaches}, we describe the existing three interactive approaches, \dpl, \gupt, \privGene and 
one non-interactive approache \eugkm.  
In Section~\ref{sec:towardHybrid}, we first show the experimental results on the performance comparison among the interactive and non-interactive approaches, and analyze their strengths and weaknesses.  In Section~\ref{sec:hybrid} we study the error behavior of \dpl and \eugkm, introduce the hybrid approach, and compare these with existing algorithms.  We conclude in Section~\ref{sec:conclusions}.

\section{Related Work}\label{sec:related}
The notion of differential privacy was developed in a series of papers \cite{DN03,DN04,BDMN05,DMNS06,Dwo06}.  Several primitives for answering a single query differentially privately have been proposed.  Dwork et al.~\cite{DMNS06} introduced the method of adding Laplacian noise scaled with the sensitivity.  McSherry and Talwar~\cite{MT07} introduced a more general exponential mechanism.  Nissim et al. \cite{NRS07} proposed adding noises proportion to local sensitivity.


Blum et al.~\cite{BDMN05} proposed a sublinear query (SuLQ) database model for interactively answering a sublinear number (in the size of the underlying database) of count queries differential privately. The users (e.g. machine learning algorithms) issue queries and get responses which are added laplace noises. They applied the SuLQ framework to the \km clustering and some other machine learning algorithms.
McSherry \cite{McSherry09} built the PINQ (Privacy INtegrated Queries) system, a programming platform which provides several differentially-private primitives to enable data analysts to write privacy-preserving applications. These private primitives include noisy count, noisy sum, noisy average, and exponential mechanism. The \dpl algorithm, which we compare against in this paper, has been implemented using these primitives.
Another programming framework with differential privacy support is Airavat, which makes programs using the MapReduce framework differentially private~\cite{RSK+10}.

Nissim et al. \cite{NRS07,Smith11} propose the sample and aggregate framework (SAF), and use $k$-means clustering as a motivating application for SAF.
This SAF framework has been implemented in the GUPT system~\cite{MTS+12} and is evaluated by \km clustering.  This is the \gupt algorithm that we compared with in the paper.
Dwork~\cite{Dwo11} suggested applying a geometric decreasing privacy budget allocation strategy among the iterations of \km, whereas we use an increasing sequence.  Geometric decreasing sequence will cause later rounds using increasingly less privacy budget, resulting in higher and higher distortion with each new iteration. Zhang et al. ~\cite{ZXY+13} proposed a general private model fitting framework based on genetic algorithms. The \privGene approach in this paper is an instantiation of the framework to $k$-means clustering.

Interactive methods for other data mining tasks have been proposed.
McSherry and Mironov~\cite{MM09} adapted algorithms producing recommendations from collective user behavior to satisfy differential privacy.  Friedman and Schuster~\cite{FS10} made the ID3 decision tree construction algorithm differentially private.  Chaudhuri and Monteleoni~\cite{CM08} proposed a differentially private logistic regression algorithm.  Zhang et al.~\cite{ZZX+12} introduced the functional mechanism, which perturbs an optimization objective to satisfy differential privacy, and applied it to linear regression and logistic regression.  Differentially private frequent itemset mining has been studied in~\cite{BLST10,LQSC12}.  The tradeoffs of interactive and non-interactive approaches in these domains are interesting future research topics.

Most non-interactive approaches aim at developing solutions to answer histogram or range queries accurately~\cite{DMNS06,XWG11,HRMS10,CPS+12}. Dwork et al.~\cite{DMNS06} calculate the frequency of values and release their distribution differentially privately. Such method makes the variance of query result increase linearly with the query size. To address this issue, Xiao et al. \cite{XWG11} propose a wavelet-based method, by which the variance is polylogarithmic to the query size. Hay et al. \cite{HRMS10} organize the count queries in a hierarchy, and improve the accuracy by enforcing the consistency between the noisy count value of a parent node and those of its children. Cormode et al.~\cite{CPS+12} adapted standard spatial indexing techniques, such as quadtree and kd-tree, to decompose data space differential-privately.  Qardaji et al.~\cite{QYL12} proposed the UG and AG method for publishing 2-dimensional datasets.  Mohammed et al. \cite{MCFY11} tailored the non-interactive data release for construction of decision trees. 

Roth et al.~\cite{BLR08} studied the problem on how to release synthetic data differentially privately for any set of count queries specified in advance.  They proposed a $\epsilon$-differentially private mechanism whose error scales only logarithmically with the number of queries being answered.  However, it is not computationally efficient (super-polynomial in the data universe size).   
Subsequent work includes~\cite{DNR09, HR10, DRV10, RR10, GHRU11, HLM12}.  One of the typical works is the private multiplicative weight mechanism~\cite{HR10} which is proposed to answer count queries interactively whose error also scales logarithmically with the number of queries seen so far.  Its running time is only linear in the data universe size.

\section{Background}\label{sec:preliminaries}

\subsection{Differential Privacy} \label{sec:preliminaries:differential}

Informally, differential privacy requires that the output of a data analysis mechanism should be approximately the same, even if any single tuple in the input database is arbitrarily added or removed.
\begin{definition}[{$\epsilon$-Differential Privacy~\cite{Dwo06,DMNS06}}] \label{def:diff}
A randomized mechanism $\AA$ gives $\epsilon$-differential privacy if for any pair of neighboring datasets $D$ and $D'$, and any $S\in \mathit{Range}(\AA)$,
$$\Pr{\AA(D)=S} \leq e^{\epsilon}\cdot \Pr{\AA(D')=S}.$$
\end{definition}

In this paper we consider two datasets $D$ and $D'$ to be neighbors if and only if either $D=D' + t$ or $D'=D + t$, where $D +t$ denotes the dataset resulted from adding the tuple $t$ to the dataset $D$. We use $D\simeq D'$ to denote this.  This protects the privacy of any single tuple, because adding or removing any single tuple results in $e^{\epsilon}$-multiplicative-bounded changes in the probability distribution of the output.


Differential privacy is composable in the sense that combining multiple mechanisms that satisfy differential privacy for $\epsilon_1, \cdots,\epsilon_m$ results in a mechanism that satisfies $\epsilon$-differential privacy for $\epsilon=\sum_{i} \epsilon_i$.  Because of this, we refer to $\epsilon$ as the privacy budget of a privacy-preserving data analysis task.  When a task involves multiple steps, each step uses a portion of $\epsilon$ so that the sum of these portions is no more than $\epsilon$.


There are several approaches for designing mechanisms that satisfy $\epsilon$-differential privacy, including Laplace mechanism \cite{DMNS06} and Exponential mechanism \cite{MT07}. The Laplace mechanism computes a function $g$ on the dataset $D$ by adding to $g(D)$ a random noise, the magnitude of which depends on $\mathsf{GS}_g$, the \emph{global sensitivity} or the $L_1$ sensitivity of $g$.  Such a mechanism $\AA_g$ is given below:
$
\begin{array}{crl}
& \AA_g(D) & =g(D)+\mathsf{Lap}\left(\frac{\mathsf{GS}_g}{\epsilon}\right)
  \\
\mbox{where} &  \mathsf{GS}_g & = \max\limits_{(D,D') : D \simeq D'} |g(D) - g(D')|,
\\
\mbox{and}& \Pr{\Lap{\beta}=x} & = \frac{1}{2\beta} \myexp{-|x|/\beta}.
\end{array}
$

In the above, $\Lap{\beta}$ denotes a random variable sampled from the Laplace distribution with scale parameter $\beta$. 

\subsection{$\mathbf{k}$-means Clustering Algorithms}\label{sec:preliminaries:kmeans}

The $k$-means clustering problem is as follows: given a $d$-dimensional dataset $D=\{x^1, x^2,\dots, x^N\}$, partition data points in $D$ into $k$ sets $\mathbf{O}=\{O^1, O^2, \cdots, O^k\}$ so that the
Normalized Intra-Cluster Variance (NICV) is minimized
\begin{equation}\label{eqn:NICV}
\frac{1}{N} {\sum_{j=1}^{k} \sum_{x^\ell \in O^j} ||x^\ell - o^j ||^2}.
\end{equation}

The standard \km algorithm is the Lloyd's algorithm \cite{Lloyd82}. The algorithm starts by selecting $k$ points as the initial choices for the centroid.
The algorithm then tries to improve these centroid choices iteratively until no improvement can be made. In each iteration, one first partitions the data points into $k$ clusters, with each point assigned to be in the same cluster as the nearest centroid.
Then, one updates each centroid to be the center of 
the data points in the cluster.
\begin{equation}
\forall i\in [1..d]\; o^{j}_i \leftarrow \frac{\sum_{x^\ell \in O^{j}} x^\ell_i}{|O^{j}|}, \label{eq:hardK-update}
\end{equation}
where $j=1,2,\dots, k$, $x^\ell_i$ and $o^j_i$ are the $i$-th dimensions of $x^\ell$ and $o^j$, respectively.
The algorithm continues by alternating between data partition and centroid update, until it converges.

\section{The Interactive and Non-interactive Approaches}\label{sec:existingApproaches}

In this section, we describe 3 interactive approaches and 2 non-interactive approaches to differential private $k$-means clustering.

\subsection{Interactive Approaches}

\subsubsection{\dpl}\label{sec:ping}
Differentially private \km or LLoyd's algorithm was first proposed by Blum et al.~\cite{BDMN05} and was later implemented in the PINQ system~\cite{McSherry09}, a platform for interactive privacy preserving data analysis.  We call this the \dpl approach.
\dpl differs from the standard Lloyd algorithm in the following ways.  First, Laplacian noise is added to the iterative update step in the Lloyd algorithm.  Second, the number of iterations needs to be fixed in order to decide how much noise needs to be added in each iteration.

Each iteration requires computing the total number of points in a cluster and, for each dimension, the sum of the coordinates of the data points in a cluster.  Let $t$ be the number of iterations, and $d$ be the number of dimensions.
Then, each tuple is involved in answering $dt$ sum queries and $t$ count queries. To bound the sensitivity of the sum query to a small number $r$, each dimension is normalized to $[-r, r]$. Thus, the global sensitivity of \dpl is $(dr+1)t$, and each query is answered by adding Laplacian noise $\Lap{\frac{(dr+1)t}{\epsilon}}$.

There are two issues that greatly impact the accuracy of \dpl.  The first is the number of iterations.  A large number of iterations causes too much noises being added.  A small number of iterations may be insufficient for the algorithm to converge.  In~\cite{McSherry09}, the number of iterations is set to be $5$, which seems to work well for many settings.
The second is the quality of initial centroids.  A poor choice of initial centroids can result in converging to a local optimum that is far from global optimum, or not converging after the given number of iterations.  While many methods for choosing the initial points have been developed~\cite{PLL99}, these methods were developed without the privacy concern and need access to the dataset.  In~\cite{McSherry09}, $k$ points at uniform random from the domain are chosen as the initial centroids.  We have observed empirically that this can perform poorly in some settings, since some randomly chosen initial centroids are close together.  We thus introduce an improved method for choosing initial centroids that is similar to the concept of sphere packing.   Given a radius $a$, we randomly generate $k$ centroids one by one such that each new centroid is of distance at least $a$ away from each border of the domain and each new centroid is of distance at least $2a$ away from any existing centroid.  When a randomly chosen point does not satisfy this condition, we generate another point.  When we have failed repeatedly, we conclude that the radius $a$ is too large, and try a smaller radius.  We use a binary search to find the maximal value for $a$ such that it is the process of choosing $k$ centroids succeed. This process is data independent.

\subsubsection{\gupt}\label{sec:gupt}

The \km clustering problem was also used to motivate the {\em sample and aggregate} framework (SAF) for satisfying \difp, which was developed in \cite{NRS07,Smith11}, and implemented in the GUPT system~\cite{MTS+12}.  



Given a dataset $D$ and a function $f$, SAF first partitions $D$ into $\ell$ blocks, then it evaluates $f$ on each of the block, and finally it privately aggregates results from all blocks into a single one.  
Since any single tuple in $D$ falls in one and only one block, adding one tuple can affect at most one block's result, limiting the sensitivity of the aggregation step.  Thus one can add less noise in the final step to satisfy differential privacy.
%

As far as we know, GUPT \cite{MTS+12} is the only implementation of SAF.  Authors of~\cite{MTS+12} implemented \km clustering and used it to illustrate the effectiveness of GUPT.  We call this algorithm \mbox{\gupt}.
Given a dataset $D$, it first partitions $D$ into $\ell$ \mbox{blocks} $D_1, D_2, \dots, D_\ell$.
Then, for each block $D_b$ ($1\leq b\leq \ell$), it calculates its $k$ centroids $o^{b,1}, o^{b,2}, \dots, o^{b,k}$.
Finally, it averages the centroids calculated from all blocks and adds noise. Specifically, the $i$'th dimension of the $j$'th aggregated centroid is
\begin{equation}\label{eqn:GKM_SAF}
o^j_i = \frac{1}{\ell}\sum_{b=1}^\ell o^{b,j}_i +\Lap{\frac{2(max_i-min_i)\cdot k\cdot d}{\ell\cdot \epsilon}},
\end{equation}
where $o^{b,j}_i$ is the $i$'th dimension of $o^{b,j}$, $[min_i, max_i]$ is the estimated output range of $i$'th dimension.
One half of the total privacy budget is used to estimate this output range, and the other half is used for adding Laplace noise.

We have found that the implementation downloaded from \cite{GUPT}, which uses Equation (\ref{eqn:GKM_SAF}), performed poorly.  Analyzing the data closely, we found that $min_i$ and $max_i$ often fall outside of the data range, especially for small $\epsilon$. We slightly modified the code to bound $min_i$ and $max_i$ to be within the data domain.  This does not affect the privacy, was able to greatly improve the accuracy.  In this paper we use this fixed version.

Here a key parameter is the choice of $\ell$.  Intuitively, a larger $\ell$ will result in each block being very small and unable to preserve the cluster information in the blocks, and a smaller $\ell$, on the other hand, results in large noise added.  (Note the inverse dependency on $\ell$ in Equation (\ref{eqn:GKM_SAF}).  Analysis in~\cite{MTS+12} suggests to set $\ell=N^{0.4}$.  Our experimental results, however, show that the performance is quite poor.  We consider a variant that chooses $\ell = \frac{N}{3k}$, i.e., having each block containing $3k$ points, which performs much better than setting $\ell=N^{0.4}$.

\subsubsection{\privGene}\label{sec:privGene}

PrivGene~\cite{ZXY+13} is a general-purpose differentially private model fitting framework based on genetic algorithms. Given a dataset $D$ and a fitting-score function $f(D, \theta)$ that measures how well the parameter $\theta$ fits the dataset $D$, the PrivGene algorithm initializes a candidate set of possible parameters $\theta$ and iteratively refines them by mimicking the process of natural evolution.  Specifically, in each iteration, PrivGene uses the exponential mechanism \cite{MT07} to privately select from the candidate set $m^{\prime}$ parameters that have the best fitting scores, and generates a new candidate set from the $m^{\prime}$ selected parameters by crossover and mutation. Crossover regards each parameter as an $\ell$-dimensional vector. Given two parameter vectors, it randomly selects a number $\bar{\ell}$ such that $0<\bar{\ell}<\ell$ and splits each vector into the first $\bar{\ell}$ dimensions in the vector and the remaining $\ell-\bar{\ell}$ dimensions (the lower half). Then, it swaps the lower halves of the two vectors to generate two child vectors. These vectors are then mutated by adding a random noise to one randomly chosen dimension.

In \cite{ZXY+13}, PrivGene is applied to logistic regression, SVM, and \km clustering. In the case of \km clustering, the NICV formula in Equation \ref{eqn:NICV}, more precisely its non-normalized version, is used as the fitting function $f$, and the set of $k$ cluster centroids is defined as parameter $\theta$. Each parameter is a vector of $\ell=k\cdot d$ dimensions.  
Initially, the candidate set is populated with $200$ sets of cluster centroids randomly sampled from the data space, each set containing exactly $k$ centroids. Then, the algorithm runs iteratively for $\max\{8, (xN\epsilon)/m^{\prime}\}$ rounds, where $x$ and $m^{\prime}$ are empirically set to $1.25\times 10^{-3}$ and $10$, respectively, and $N$ is the dataset size.



We call the approach of applying PrivGene to \km clustering \privGene, which is similarly to \dpl in that it tries to iteratively improve the centroids.  However, rather than maintaining and improving a single set of $k$ centroids, \privGene maintains a pool of candidates, uses selection to improve their quality, and crossover and mutation to broaden the pool.  Similar to \dpl, a key parameter is the number of iterations.  Too few iterations, the algorithm may not converge.  Too many iterations means too little privacy budget for each iteration, and the exponential mechanism may not be able to select good candidates.

\subsection{Non-interactive Approaches}

Interactive approaches such as \dpl and \gupt suffer from two limitations.  First, often times the purpose of conducting \km clustering is to visualize how the data points are partitioned into clusters.  The interactive approaches, however, output only the centroids.  In the case of \dpl, one could also obtain the number of data points in each cluster; however, it cannot provide more detailed information on what shapes data points in the clusters take.  The value of interactive private \km clustering is thus  limited.  Second, as the privacy budget is consumed by the interactive method, one cannot perform any other analysis on the dataset; doing so will violate differential privacy.


Non-interactive approaches, which first generate a synopsis of a dataset using a differentially private algorithm, and then apply \km clustering algorithm on the synopsis, avoid these two limitations.  In this paper, we consider the following synopsis method.  Given a $d$-dimensional dataset, one partitions the domain into $M$ equal-width grid cells, and then releases the noisy count in each cell, by adding Laplacian noise to each cell count.

The synopsis released is a set of cells, each of which has a rectangular bounding box and a (noisy) count of how many data points are in the bounding box.  The synopsis tells only how many points are in a cell, but not the exact locations of these points.  For the purpose of clustering, We treat all points as if they are at the center of the bounding box.  In addition, these noisy counts might be negative, non-integer, or both.  A straightforward solution is to round the noisy count of a cell to be a non-negative nearest integer and replicate the cell center as many as the rounded count. This approach, however, may introduce a significant systematic bias in the clustering result, when many cells in the \ug synopsis are empty or close to empty and these cells are not distributed uniformly.  In this case, simply turning negative counts to zero can produce a large number of points in those empty areas, which can pull the centroid away from its true position.  We take the approach of keeping the noisy count unchanged and adapting the centroid update procedure in \km to use the cell as a whole.  Specifically, given a cell with center $c$ and noisy count $\tilde{n}$, its contribution to the centroid is $c\times \tilde{n}$.  Using this approach, in one cluster, cells who have negative noisy count can ``cancel out'' the effect of other cells with positive noise.  Therefore, we can have better clustering performance.

For this method, the key parameter is $M$, the number of cells.  When $M$ is large, the average count per cell is low, and the noise will have more impact.  When $M$ is small, each cell covers a large area, and treating all points as at the center may be inaccurate when the points are not uniformly distributed.  We now describe two methods of choosing $M$.

\subsubsection{\eugkm}\label{sec:eug}

Qardaji et al.~\cite{QYL12} studied the effectiveness of producing differentially private synopses of 2-dimensional datasets for answering rectangular range counting queries (i.e., how many data points there are in a rectangular range) with high accuracy, and suggested choosing $M=\frac{N\epsilon}{10}$. 
We now analyze the choice of $M$ for higher-dimensional case.  We use {\em mean squared error} (MSE) to measure the accuracy of $est$ with respect to $act$. That is,
$$
\mse{est}=\EE{(est-act)^2}=\Var{est}+(\BI{est})^2,
$$
where $\Var{est}$ is the variance of $est$ and $\BI{est}$ is its bias.

There are two error sources when computing $est$. First, Laplace noises are added to cell counts to satisfy \difp. This results in the variance of $est$. Since counting a cell size has the sensitivity of 1, Laplace noise $\Lap{\frac{1}{\epsilon}}$ is added. Thus, the noisy count has the variance of $\frac{2}{\epsilon^2}$. Suppose that the given counting query covers $\alpha$ portion of the total $M$ cells in the data space. Then, $\Var{est}=\alpha\frac{2M}{\epsilon^2}$. Second, the given counting query may not fully contain the cells that fall on the border of the query rectangle. To estimate the number of points in the intersection between the query rectangle and the border cells, it assumes that data are uniformly distributed. This results in the bias of $est$, which depends on the number of tuples in the border cells. The border of the given query consists of $2d$ hyper rectangles, each being $(d-1)$-dimensional.
The number of cells falling on a hyper rectangle is in the order of $M^{\frac{d-1}{d}}$. On average the number of tuples in these cells is in the order of $M^{\frac{d-1}{d}}\cdot \frac{N}{M}=\frac{N}{M^{\frac{1}{d}}}$. Therefore, we estimate the bias of $est$ with respect to one hyper rectangle to be $\beta\frac{N}{M^{\frac{1}{d}}}$, where $\beta\geq 0$ is a parameter. We thus estimate $(\BI{est})^2$ to be $2d\left(\beta\frac{N}{M^{\frac{1}{d}}}\right)^2$. Summing the variance and the squared bias, it follows that
$$
\mse{est}= \alpha\frac{2M}{\epsilon^2} + \beta^2\frac{2dN^2}{M^{\frac{2}{d}}}.
$$
To minimize the MSE, we set the derivative of the above equation with respect to $M$ to 0. This gives
\begin{equation}\label{eqn:eug_formula}
M=\left(\frac{N\epsilon}{\theta}\right)^{\frac{2d}{2+d}},
\end{equation}
where $\theta = \sqrt{\frac{\alpha}{2\beta^2}}$.
We name the above extended approach as \eug (extended uniform griding approach). We use \eugkm to represent the \eug-based \km clustering scheme.

\section{Performance and Analysis}\label{sec:towardHybrid}

In this section, we compare and analyze the performance of the five methods introduced in the last section.


\subsection{Evaluation Methodology}\label{sec:existingAppExpt}

We experimented with six external datasets and a group of syntheticly generated datasets.  The first dataset is a 2D synthetic dataset S1~\cite{SIPU}, which is a benchmark to study the performance of clustering schemes.  S1 contains 5,000 tuples and 15 Gaussian clusters.  The Gowalla dataset contains the user checkin locations from the Gowalla location-based social network whose users share their checking-in time and locations (longitude and latitude).  We take all the unique locations, and obtain a 2D dataset of 107,091 tuples.  We set $k = 5$ for this dataset. The third dataset is a 1-percentage sample of road dataset which was drawn from the 2006 TIGER (Topologically Integrated Geographic Encoding and Referencing) dataset \cite{TIGER}.  It contains the GPS coordinates of road intersections in the states of Washington and New Mexico.   The fourth is Image~\cite{SIPU}, a 3D dataset with 34,112 RGB vectors. We set $k = 3$ for it.  We also use the well known Adult dataset \cite{AN10}. We use its six numerical attributes, and set $k=5$. The last dataset is Lifesci. It contains 26,733 records and each of them consists of the top 10 principal components for a chemistry or biology experiment. As previous approaches \cite{MTS+12,ZXY+13}, we set $k=3$. Table \ref{table:dis} summarizes the datasets. For all the datasets, we normalize the domain of each attribute to [-1.0, 1.0].

When generating the synthetic datasets, we fix the dataset size to 10,000, and vary $k$ and $d$ from 2 to 10. For each dataset, $k$ well separated Gaussian clusters of equal size are generated, and 30 sets of initial centroids are generated in the same way as in Section \ref{sec:ping}.



Implementations for \dpl and GkM were downloaded from \cite{PINQ} and \cite{GUPT}, respectively. The source code of \privGene~\cite{ZXY+13} was shared by the authors. 
We implemented \eugkm.


\begin{table}[h]
\caption{Descriptions of the Datasets.}
\label{table:dis}
\scriptsize
\begin{center}
\begin{tabular}{ | c | c | c | c | c | c |}
\hline
Dataset & \# of tuples & $d$ & $k$ & $\ell_{\mbox{GkM}}$ & $\ell_{\mbox{GkM-3K}}$ \\
\hline
S1 & 5,000 & 2 & 15 & 30 & 111\\
\hline
Gowalla & 107,091 & 2  & 5 & 103 & 7,139\\
\hline
TIGER & 16,281& 2  & 2 & 48 & 2,714\\
\hline
Image & 34,112 & 3 & 3 & 65 & 3,790\\
\hline
Adult-num & 48,841 & 6 & 5 & 75 & 3,256\\
\hline
Lifesci & 26,733 & 10 & 3 & 59 & 2,970\\
\hline
Synthetic & 10,000 & [2, 10] & [2, 10] & 40 & $10000/(3k)$\\
\hline

\end{tabular}
\end{center}
\end{table}

\mypara{Configuration.} Each algorithm outputs $k$ centroids $\mathbf{o}=\{o^1,o^2,\cdots,o^k\}$.  To evaluate the quality of such an output $\mathbf{o}$, we compute the average squared distance between any data point in $D$ and the nearest centroid in $\mathbf{o}$, and call this the NICV.




We note that since both \dpl and \eugkm use Lloyd-style iteration,  
they are affected by the choice of initial centroids. In addition, all algorithms have random noises added somewhere to satisfy differential privacy. To conduct a fair comparison, we need to carefully average out such randomness effects.
\gupt and \privGene do not take a set of initial centroids as input.  \gupt divides the input dataset into multiple blocks, and for each block invokes the standard \km implementation from the Scipy package~\cite{Scipy} with a different set of initial centroids to get the result, and finally aggregates the outputs for all the blocks.  We run \gupt and \privGene 100 times and report the average result.

For \dpl, we generate 30 sets of initial centroids, run \dpl 100 times on each set of initial centroids, and we report the average of the 3000 NICV values as the final evaluation of \dpl.  

The non-interactive approach (\eugkm) has the advantage that once a synopsis is published, one can run \km clustering with as many sets of initial centroids as one wants and choose the result that has the best performance relative to the synopsis.
In our experiments, given a synopsis, we use the same $30$ sets of initial centroids as those generated for the \dpl method. For each set, we run clustering and output a set of $k$ centroids. Among all the 30 sets of output centroids, we select the one that has the lowest NICV relative to the synopsis rather than to the original dataset. This process ensures selecting the set of output centroids satisfies differential privacy. We then compute the NICV of this selected set relative to the original dataset, and take it as the resulting NICV with respect to the synopsis.  To deal with the randomness introduced by the process of generating synopsis, we generate 10 different synopses and take the average of the resulting NICV.

As the baseline, we run standard \km algorithm \cite{Lloyd82} over the same 30 sets of initial centroids and take the minimum NICV among all the 30 runs.

\mypara{Experimental Results.}
Figure~\ref{fig:non-interactive_VS_interactive} reports the results for the 6 external datasets.  For these, we vary $\epsilon$ from $0.05$ to $2.0$ and plot the NICV curve for the methods mentioned in Section~\ref{sec:existingApproaches}.  This enables us to see how these algorithms perform under different $\epsilon$.

Figure~\ref{fig:synthe-heatmap} reports the results for the synthetic datasets.  For these, we fix $\epsilon = 1.0$ and report the difference of NICV between each approach and the baseline.  This enables us to see the scalability of these algorithms when $k$ and $d$ increase.

For interactive approaches, \dpl has the best performance in most cases. Its performance is worse than that of \privGene only on the small dataset S1 when the privacy budget $\epsilon$ is smaller than $0.15$. 
Comparing \dpl and \eugkm, we observe that
in the four low dimensional datasets (S1, Gowalla, TIGER and Image), \eugkm clearly outperforms \dpl at small $\epsilon$ value and their gap becomes smaller as $\epsilon$ increases. However, in the two high dimensional datasets (Adult-num and Lifesci), \dpl outperforms \eugkm almost in all given privacy budgets.
Similar results can also be found in Figure \ref{fig:synthe-heatmap}.  Figure~\ref{fig:synthe-heatmap} also exhibits the effects of the number of clusters and the number of dimensions.  The \eugkm's performance is more sensitive to the increase of dimension, while \dpl gets worse quickly as the number of clusters increases.  Below we analyze these algorithms to understand why they perform in this way.  In addition, Figure~\ref{fig:synthe-heatmap} shows the difference of \eugkm's performance on different $\theta$ choices.  Setting $\theta = 10$ for \eugkm works well in most cases.

\begin{figure*}[!htb]

	\begin{tabular}{cc}
    \multicolumn{2}{c}{\hspace{-2mm}\includegraphics[width = \textwidth]{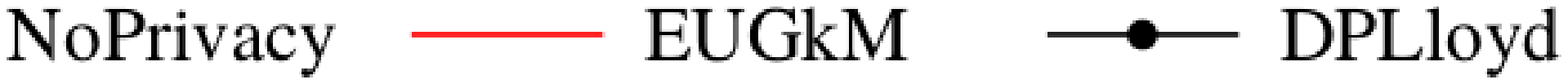}}\\    
	\includegraphics[width = 3.2in]{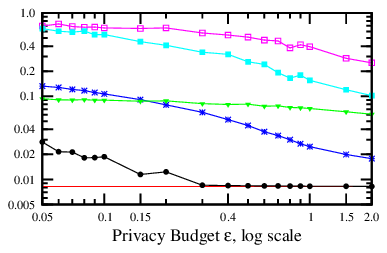} &
	\includegraphics[width = 3.2in]{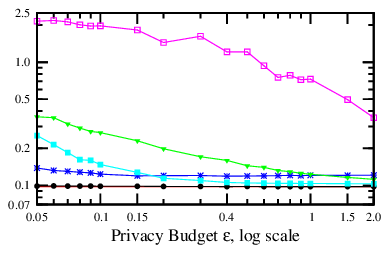}\\
	(a) S1 [$d = 2$, $k = 15$] & (b) Image [$d = 3$, $k = 3$]
	\end{tabular}

	\begin{tabular}{cc}
	\includegraphics[width = 3.2in]{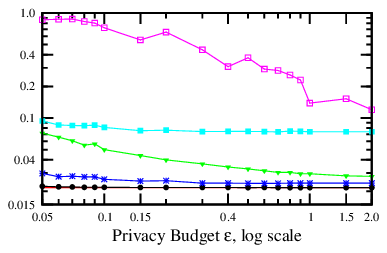} &
	\includegraphics[width = 3.2in]{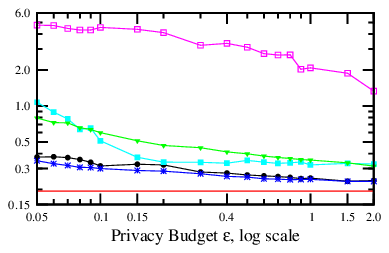}\\
	(c) Gowalla [$d = 2$, $k = 5$]& (d) Adult-num [$d = 6$, $k = 5$]
	\end{tabular}

	\begin{tabular}{cc}
	\includegraphics[width = 3.2in]{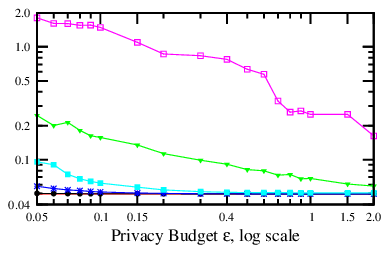} &
	\includegraphics[width = 3.2in]{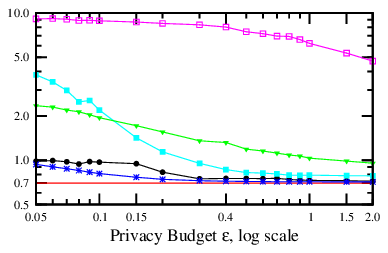}\\
	(e) TIGER [$d = 2$, $k = 2$] & (f) Lifesci [$d = 10$, $k = 3$]
	\end{tabular}

	\caption{The comparison of \dpl, \eugkm, \privGene and \gupt. x-axis: privacy budget $\epsilon$ in log-scale. y-axis: NICV in log-scale.}\label{fig:non-interactive_VS_interactive}
\end{figure*}

\begin{figure*}[!htb]

\begin{tabular}{ccc}
	\hspace{-0.5cm}\includegraphics[width = 2.3in]{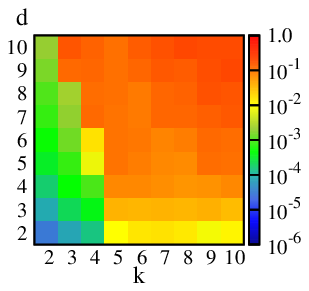} 
	& \hspace{-0.5cm} \includegraphics[width = 2.3in]{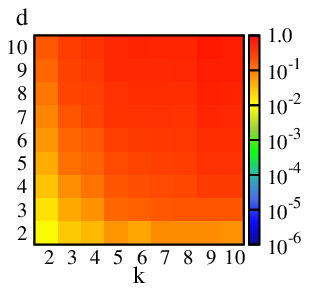}
	& \hspace{-0.5cm} \includegraphics[width = 2.3in]{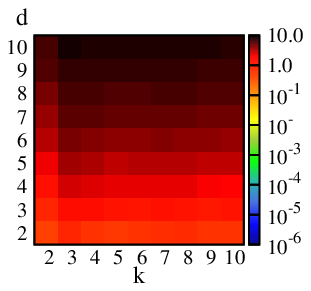} \\
	\hspace{-0.5cm}(a) \dpl & \hspace{-0.5cm} (b) \privGene  & \hspace{-0.5cm}(c) \gupt  \\
	
	\hspace{-0.5cm}	\includegraphics[width = 2.3in]{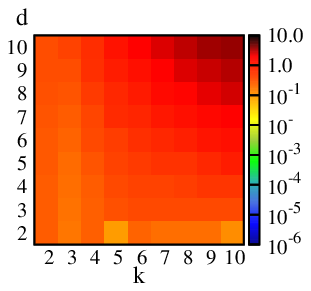}
	& \hspace{-0.5cm}	\includegraphics[width = 2.3in]{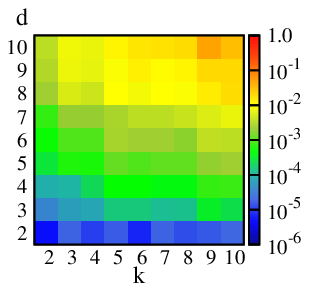}
	&\hspace{-0.5cm}	\includegraphics[width = 2.3in]{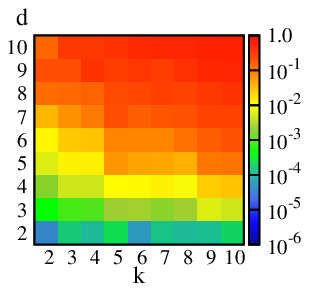}\\
	\hspace{-0.5cm}(d) \gupt-3K & \hspace{-0.5cm} (e) \eugkm & \hspace{-0.5cm} (f) \eugkm $\theta = 1$ \\

	\hspace{-0.5cm}	\includegraphics[width = 2.3in]{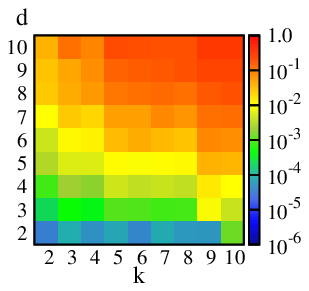}
	& \hspace{-0.5cm}	\includegraphics[width = 2.3in]{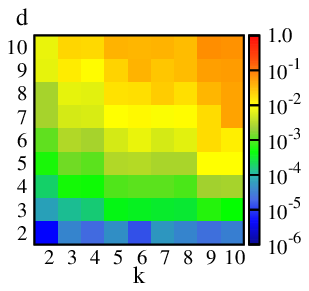}
	& \hspace{-0.5cm}	\includegraphics[width = 2.3in]{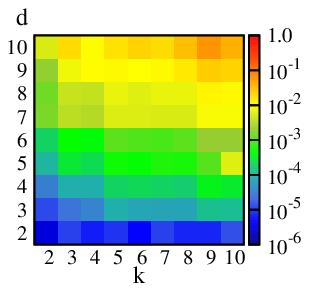}\\
	\hspace{-0.5cm}(g) \eugkm $\theta = 2$ & \hspace{-0.5cm} (h) \eugkm $\theta = 5$ & \hspace{-0.5cm} (i) \eugkm $\theta = 20$ \\

	\hspace{-0.5cm}	\includegraphics[width = 2.3in]{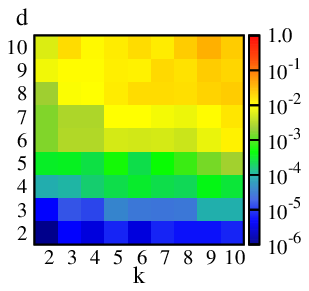}
	& \hspace{-0.5cm}	\includegraphics[width = 2.3in]{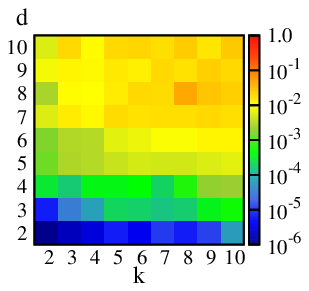}
	& \quad\\
	\hspace{-0.5cm}(j) \eugkm $\theta = 50$ & \hspace{-0.5cm} (k) \eugkm $\theta = 100$  & \quad \\

\end{tabular}

	\caption{The heatmap by varying $k$ and $d$}\label{fig:synthe-heatmap}
\end{figure*}

\subsection{The Analysis of the \gupt Approach}\label{sec:non_inter:gkm}

From Figures~\ref{fig:non-interactive_VS_interactive} and \ref{fig:synthe-heatmap}, it is clear that \gupt is always much worse than others.  There are two sources of errors for \gupt.  One is that \gupt is aggregating centroids computed from the subsets of data, and this aggregate may be inaccurate even without adding noise.  The other is that the noise added according to Equation~(\ref{eqn:GKM_SAF}) may be too large.  To tease out the role played by these two error sources, Figure \ref{fig:gupt-only} shows the effect of varying block size from around $N^{0.1}$ to $N$.  It shows error from \gupt, error from using the aggregation without noise (SAG), and error from adding noise computed by Equation \ref{eqn:GKM_SAF}) to the best known centroids (Noise).  From the figure, it is clear that setting $\ell=N^{0.4}$, which corresponds to block size of $N^{0.6}$ is far from optimal, as the error \gupt is dominated by that from the noise, and is much higher than the error due to sample and aggregation.
Indeed, we observed that as the block size decreases the error of \gupt keeps decreasing, until when the block size gets close to $k$.  It seems that even though many individual blocks result in poor centroids, aggregating these relatively poor centroids can result in highly accurate centroids.  This effect is most pronounced in the Tiger dataset, which consists of two large clusters.  The two centroids computed from each small block can be approximately viewed as choosing one random point from each cluster.  When averaging these centroids, one gets very close to the true centroids.

This observation motivated the introduction of \gupt-3K algorithm, which fixes each block size to be $3k$.  Recall that we are to select $k$ centroids from each block.  As can be seen from Figures~\ref{fig:non-interactive_VS_interactive} and \ref{fig:synthe-heatmap}, \gupt-3K becomes competitive with PGkM, sometimes significantly outperforms \privGene (e.g. TIGER and Lifesci), although it still underperforms \dpl.



\begin{figure*}[!htb]
	\begin{tabular}{ccc}
	\includegraphics[width = 2.2in]{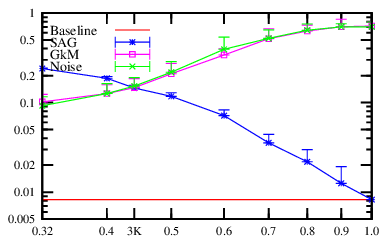} &
	\includegraphics[width = 2.2in]{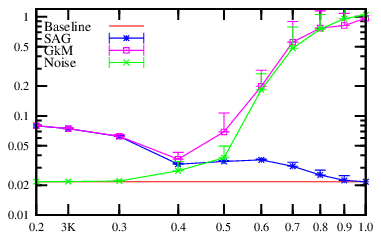} &
	\includegraphics[width = 2.2in]{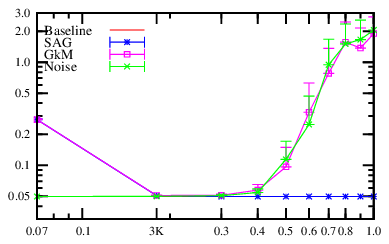} \\
	(a) S1 [$d = 2$, $k = 15$]  & (b) Gowalla [$d = 2$, $k = 5$] & (c) TIGER [$d = 2$, $k = 2$]
	\end{tabular}

	\begin{tabular}{ccc}
	\includegraphics[width = 2.2in]{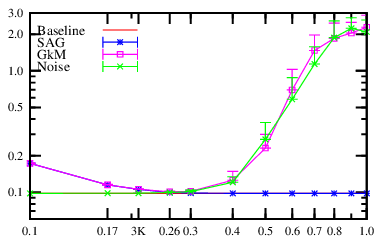} &
	\includegraphics[width = 2.2in]{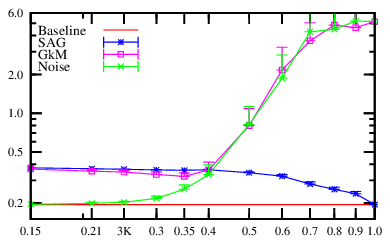} &
	\includegraphics[width = 2.2in]{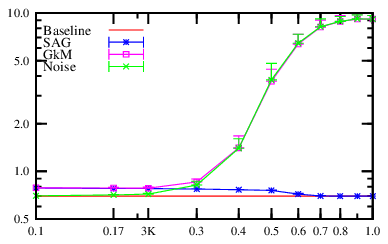}\\
	(d) Image [$d = 3$, $k = 3$]  & (e) Adult-num [$d = 6$, $k = 5$]  & (f) Lifesci [$d = 10$, $k = 3$]
	\end{tabular}

    \caption{The analysis of the \gupt Approach. x-axis: block size exponent in log-scale, y-axis: NICV in log-scale. }\label{fig:gupt-only}
\end{figure*}





\subsection{The Analysis of the \privGene Approach}

\privGene is a stochastic \km method based on genetic algorithms. A stochastic method converges to global optimum \cite{KrishnaM99}. On the contrary, \dpl is a gradient descent method derived from the standard Lloyd's algorithm \cite{Lloyd82}, which may reach local optimum. However, \privGene is still inferior to \dpl in Figure~\ref{fig:non-interactive_VS_interactive}.

There are two possible reasons.
First, a stochastic approach typically takes a `larger' number of iterations to converge \cite{KrishnaM99}. Figure \ref{fig:convergence-rate} compares the Lloyd's algorithm with \gene (i.e., the non-private version of \privGene without considering \difp). For Lloyd, we reuse the initial centroids generated in Section \ref{sec:existingAppExpt}. Given a dataset, we run Lloyd on the 30 sets of initial centroids generated for the dataset, and report the average NICV.
Generally, \gene overtakes Lloyd as the number of iterations increases and finally converges to the global optimum. However, Lloyd improves its performance much faster than \gene in the first few iterations, and converges to the global optimal (or local optimum) more quickly. For example, in the Image dataset, Lloyd reaches the best baseline after three iterations, while the \gene needs more than 10 iterations to achieve the same.
The second reason that \privGene is inferior to \dpl is the low privacy budget allocated to select a parameter (i.e., a set of $k$ cluster centroids) from the candidate set. In each iteration \privGene selects 10 parameters, and the total number of iterations is at least 8. Thus, the privacy budget allocated to select a single parameter is at most $\epsilon/80$. Therefore, \privGene has reasonable performance only for big $\epsilon$ value.


\begin{figure*}[!htb]
	\begin{tabular}{ccc}
	\includegraphics[width = 2.2in]{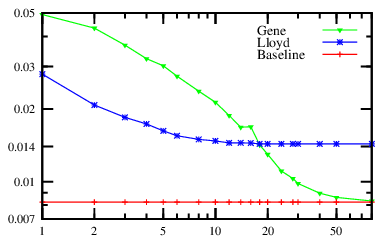} &
	\includegraphics[width = 2.2in]{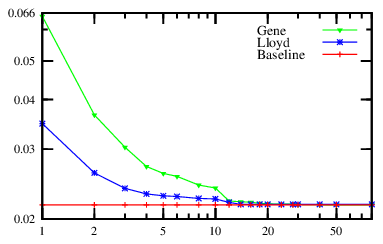} &	
	\includegraphics[width = 2.2in]{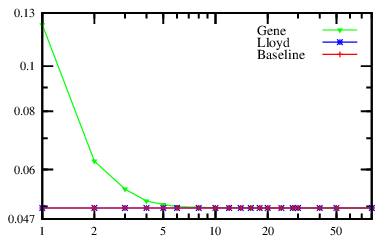}\\
	(a) S1 [$d = 2$, $k = 15$] & (b) Gowalla [$d = 2$, $k = 5$] &  (c) TIGER [$d = 2$, $k = 2$]
	\end{tabular}

	\begin{tabular}{ccc}
	\includegraphics[width = 2.2in]{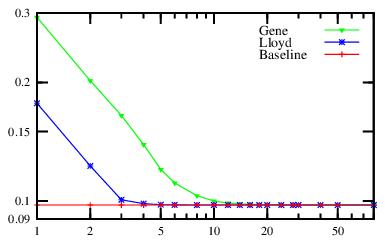} &
	\includegraphics[width = 2.2in]{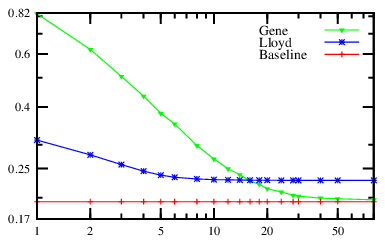} &	
	\includegraphics[width = 2.2in]{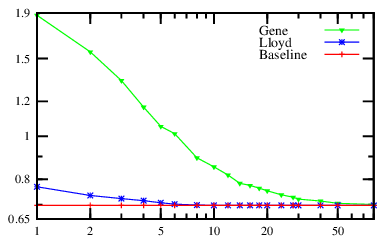}\\
	(d) Image [$d = 3$, $k = 3$] & (e) Adult-num [$d = 6$, $k = 5$] &  (f) Lifesci [$d = 10$, $k = 3$]
	\end{tabular}

	\caption{The comparison of the convergence rate of the genetic algorithm based k-means and Lloyd algorithm. x-axis: number of iterations in log-scale, y-axis: NICV in log-scale.}\label{fig:convergence-rate}
\end{figure*}

\section{The Hybrid Approach}\label{sec:hybrid}

Experimental results in Section~\ref{sec:towardHybrid} establish that \dpl is the best performing interactive method; however, it still under-performs EUGkM.  Recall that EUGkM publishes a private synopsis of the the dataset, and thus enables other analysis to be performed on the dataset, beyond $k$-means.  This means that currently the non-interactive method has a clear advantage over interactive methods, at least for $k$-means clustering.  

An intriguing question is ``Whether EUGkM is the best we can do for $k$-means clustering?'' In particular, can we further improve \dpl?  Recall that there are two key issues that greatly affect the accuracy of \dpl: the number of iterations and the choice of initial centroids.  In fact, these two are closely related.  If the initially chosen centroids are very good and close to the true centroids, one only needs perhaps one more iteration to improve it, and this reduction in the number of iterations would mean little noise is added.  Now if only we have a method to choose really good centroids in a differentially private way, then we can use part (e.g., half) of the privacy budget to get those initial centroids, and the remaining privacy budget to run one iteration of \dpl to further improve it.  

In fact, we do have such a method.  EUGkM does it.  This leads us to propose a hybrid method that combines non-interactive EUGkM with interactive \dpl.  We first use half the privacy budget to run EUGkM, and then use the centroids outputted by EUGkM as the initial centroids for one round of \dpl.  Such a method, however, may not actually outperform EUGkM, especially when the privacy budget $\epsilon$ is small, since then one round of \dpl may actually worsen the centroids.  Therefore, when $\epsilon$ is small, we should stick to the EUGkM method, and only when $\epsilon$ is large enough should we adopt the EUGkM+\dpl approach.  In order to determine what $\epsilon$ is large enough, we analyze how the errors depend on the various parameters in \dpl and in EUGkM.

\subsection{Error Study of \dpl}\label{sec:dpl_Analysis}

DPLloyd adds noises to each iteration of updating centroids.  To study the error behavior of DPLloyd due to the injected Laplace noises, we focus on analyzing the mean squared error (MSE) between noisy centroids and true centroids in one iteration.  

Consider one centroid and its update in one iteration.  The true centroid's $i$'th dimension should be $o_i=\frac{S_i}{C}$, where $C$ is the number of data points in the cluster and $S_i$ is the sum of $i$'th dimension coordinates of data points in the cluster.  Consider the noisy centroid $\widehat{o}$; its $i$'th dimension is $\widehat{o_i}=\frac{S_i+\Delta S_i}{C +\Delta C}$, where $\Delta C$ is the noise added to the count and $\Delta S_i$ is the noise added to the $S_i$.  The MSE is thus:
\begin{equation} \label{eq:DPLMSE}
\mse{\widehat{o}}  = \E\left[ \sum_{i=1}^{d} \left(\frac{S_i + \Delta S_i}{C + \Delta C} - \frac{S_i}{C}\right)^2 \right]
\end{equation}


Derivation based on the above formula gives the following proposition.

\begin{proposition}\label{thm:DPLMSE}
In one round of \dpl, the MSE is
\begin{equation*}\label{eqn:dpl_mse}
\Theta\left(\frac{(kt)^2d^3}{(N\epsilon)^2}\right).
\end{equation*}
\end{proposition}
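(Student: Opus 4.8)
The plan is to start from Equation~(\ref{eq:DPLMSE}) and reduce it, via a first-order approximation in the count noise, to a sum of independent Laplace variances, then substitute the noise scale and the typical cluster size to read off the growth rate in each parameter. The first ingredient is the noise scale: from the sensitivity analysis preceding the statement, every count query and every coordinate-sum query is answered with noise $\Lap{\frac{(dr+1)t}{\epsilon}}$. Writing $\beta = \frac{(dr+1)t}{\epsilon}$, each of $\Delta C$ and the $d$ variables $\Delta S_i$ is then an independent, zero-mean Laplace variable with $\Var{\Delta C} = \Var{\Delta S_i} = 2\beta^2$.

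Next I would linearize a single coordinate. Combining the two fractions in the $i$-th summand of~(\ref{eq:DPLMSE}) gives
$$\frac{S_i + \Delta S_i}{C + \Delta C} - \frac{S_i}{C} = \frac{C\,\Delta S_i - S_i\,\Delta C}{C\,(C+\Delta C)},$$
and in the regime where the cluster is large relative to the noise ($C \gg \beta$) I would replace $C+\Delta C$ by $C$ in the denominator, so that $\widehat{o_i}-o_i \approx \frac{1}{C}\left(\Delta S_i - o_i\,\Delta C\right)$ after using $o_i = S_i/C$. Squaring and taking expectations, the cross term vanishes by independence and zero mean, leaving $\E[(\widehat{o_i}-o_i)^2] \approx \frac{2\beta^2(1+o_i^2)}{C^2}$. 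Summing over the $d$ coordinates and using $\sum_i o_i^2 = \|o\|^2 \le dr^2$, so that $d + \|o\|^2 = \Theta(d)$ for constant $r$, yields $\mse{\widehat{o}} = \Theta\!\left(\beta^2 d / C^2\right)$.

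It then remains to substitute typical magnitudes. Assuming roughly balanced clusters, a single cluster holds $C = \Theta(N/k)$ points, and treating $r$ as a constant gives $(dr+1) = \Theta(d)$ and hence $\beta^2 = \Theta(d^2t^2/\epsilon^2)$. Putting these together,
$$\mse{\widehat{o}} = \Theta\!\left(\frac{d^2t^2/\epsilon^2 \cdot d}{(N/k)^2}\right) = \Theta\!\left(\frac{(kt)^2 d^3}{(N\epsilon)^2}\right),$$
which is the claimed bound.

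The main obstacle is justifying the linearization rigorously rather than heuristically. Since $\Delta C$ is Laplace and $1/(C+\Delta C)^2$ fails to be integrable near $\Delta C = -C$, the expectation in~(\ref{eq:DPLMSE}) is delicate, and one must argue that the event $|\Delta C| \gtrsim C$ carries only exponentially small probability $\myexp{-\Theta(C/\beta)}$ and contributes negligibly to both the upper and lower $\Theta$ estimates. This is precisely the large-$N\epsilon$ regime $C = \Theta(N/k) \gg \beta$ in which the proposition is meaningful; controlling this truncation, together with the balanced-cluster assumption underlying $C = \Theta(N/k)$, is where the real care is needed.
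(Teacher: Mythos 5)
Your proposal is correct and follows essentially the same route as the paper: linearize the ratio by dropping $\Delta C$ from the denominator, use independence and zero mean of the Laplace noises to reduce the MSE to $\frac{1}{C^2}\left(\Var{\Delta S_i} + o_i^2\,\Var{\Delta C}\right)$ summed over dimensions, then substitute $C \approx N/k$ and the noise variance $2\left(\frac{(dr+1)t}{\epsilon}\right)^2$ with $r$ constant. Your closing remark about the non-integrability of $1/(C+\Delta C)^2$ and the need to truncate the event $|\Delta C|\gtrsim C$ is a fair observation of a technical gap the paper's heuristic derivation also leaves open, but it does not change the argument.
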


\begin{proof}
Let us first consider the MSE on the $i$-th dimension.
$$
\begin{array}{rl}
 & \hspace{-0.4cm}\mse{\widehat{o_i}} = \E\left[ \left( \frac{S_i + \Delta S_i}{C + \Delta C} - \frac{S_i}{C}\right)^2 \right]
\\
 \approx &\hspace{-0.3cm} \E\left[ \left(\frac{C \Delta S_i  - S_i \Delta C}{C^2} \right)^2 \right]\nonumber
\\
 = &\hspace{-0.3cm} \frac{\E[(\Delta S_i)^2]}{C^2} +  \frac{\E[S_i^2 (\Delta C)^2]}{C^4} + \frac{2CS_i\E[\Delta S_i \Delta C]}{C^4} 
\\
 = & \frac{\Var{\Delta S_i}}{C^2} +  \frac{S_i^2\Var{\Delta C}}{C^4}
\end{array}
$$
The last step holds, because $\Delta S_i$ and $\Delta C$ are independent zero-mean Laplacian noises and the following formulas hold:
$$
\begin{cases}
\E[\Delta S_i \Delta C]=0\\
\E[(\Delta S_i)^2]=\E[(\Delta S_i)^2]-(\E[\Delta S_i])^2=\Var{\Delta S_i}\\
\E[(\Delta C)^2] =\E[(\Delta C)^2]-(\E[\Delta C])^2= \Var{\Delta C},
\end{cases} \nonumber
$$
where $\Var{\Delta S_i}$ and $\Var{\Delta C}$ are the variances of $\Delta S_i$ and $\Delta C$, respectively. 

Suppose that on average $\frac{|S_i|}{2r\cdot C}=\rho$, where $[-r,r]$ is the range of the $i$'th dimension. That is, $\rho$ is the normalized coordinate of $i$-th dimension of the cluster's centroid. Furthermore, suppose that each cluster is about the same size, i.e., $C\approx \frac{N}{k}$. Then, $\mse{\widehat{o_i}}$ can be approximated as follows:
\begin{eqnarray}
\mse{\widehat{o_i}} & \approx &   \frac{k^2}{N^2} \left( \Var{\Delta S_i} +  (2\beta r)^2 \cdot \Var{\Delta C} \right) \label{eqn:DPLMSEoi}
\end{eqnarray}
\dpl adds to each sum/count function Laplace noise $\Lap{\frac{(dr+1)t}{\epsilon}}$. Therefore, both $\Var{\Delta S_i}$ and $\Var{\Delta C}$ are equal to $\frac{2((dr+1)t)^2}{\epsilon^2}$. From Equation (\ref{eqn:DPLMSEoi}) we obtain
\begin{eqnarray*}
\mse{\widehat{o_i}} & \approx &\frac{k^2}{N^2} \left( \Var{\Delta S_i} +  (2\rho r)^2 \cdot \Var{\Delta C} \right) \\
& = & 2 (1+(2\rho r)^2) \left(\frac{kt(dr+1)}{N\epsilon}\right)^2.
\end{eqnarray*}
As the noise added to each dimension is independent, from Equation \ref{eq:DPLMSE} we know that the MSE is
\begin{eqnarray}\label{eqn:DPLMSE_approx}
\mse{\widehat{o}} = \sum_{i=1}^d \mse{\widehat{o_i}} \approx 2d (1+(2\rho r)^2) \left(\frac{kt(dr+1)}{N\epsilon}\right)^2
\end{eqnarray}
When $r$ is a small constant, this becomes $\Theta\left(\frac{(kt)^2 d^3}{(N\epsilon)^2}\right)$.
\end{proof}

Proposition~\ref{thm:DPLMSE} shows that the distortion to the centroid proportional to $t^2k^2d^3$, while  inversely proportional to $(N\epsilon)^2$. At first glance, this analysis seems to conflict with the experimental result in Figure \ref{fig:synthe-heatmap} (a), where \dpl is much less scalable to $k$ than to $d$. The reason behind is that the performance of \dpl is also affected by the fact that $5$ rounds are not enough for it to converge.  When $k$ increases, converging takes more time, and it is also more likely that choices of initial centroids lead to local optima that are far from global optimum.

\subsection{Error Study of EUGkM}\label{sec:non_inter_analysis}

Non-interactive approach partitions a dataset into a grid of $M$ uniform cells. Then, it releases private synopses for the cells, and runs \km clustering on the synopses to return the cluster centroids. 
%
%
%
Similar to the error analysis for \dpl, we analyze the MSE. Let $o$ be the true centroid of a cluster, and $\widehat{o}$ be its estimator computed by a non-interactive approach. The MSE between $\widehat{o}$ and $o$ is composed of two error sources. First, the count in each cell is inaccurate after adding Laplace noise. This results in the variance (i.e., $\Var{\widehat{o}}$) of $\widehat{o}$ from its expectation $\EE{\widehat{o}}$. Second, we no longer have the precise positions of data points, and only assume that they occur at the center in a cell. Thus, the expectation of $\widehat{o}$ is not equal to $o$, resulting in a bias (i.e., $\BI{\widehat{o}}$). The MSE is the combination of these two errors.
\begin{equation}\label{eqn:mse}
\mse{\widehat{o}}  =  \Var{\widehat{o}} + (\BI{\widehat{o}})^2
\end{equation}

\mypara{Analyzing the variance.} We assume that each cluster has a volume that is $\frac{1}{k}$ of the total volume of the data space, and has the shape of a cube. In $d$-dimensional case, the width of the cube is $w=\frac{2r}{\sqrt[d]{k}}$. Suppose that the {\em geometric} center\footnote{Note that this is not the cluster centroid.} of the cube is $\tau_i$. Let $T$ be the set of cells included in the cluster. For each cell $t \in T$, we use $c_t$ to denote the number of tuples in $t$, $t_i$ to denote the $i$'th dimension coordinate of the center of cell $t$, and $\nu_t$ to denote the noise added to the cell size.  Let $\widehat{o_i}$ be the $i$-th dimension of the noisy centroid. Then, the variance of $\widehat{o_i}$ is
$$
\begin{array}{ll}
& \Var{\widehat{o_i}}  =   \Var{\widehat{o_i}-\tau_i}\vspace{1.5mm}
 \\
= & \Var{\frac{\sum_{t \in T} t_i (c_t + \nu_t)}{\sum_{t \in T} (c_t + \nu_t)}- \tau_i} \vspace{1.5mm}
 \\
= & \Var{\frac{\sum_{t \in T} (t_i - \tau_i) (c_t + \nu_t)}{\sum_{t \in T} (c_t + \nu_t)}}\vspace{1.5mm}
\\
\approx & \frac{1}{C^2}\sum_{t\in T} \left((t_i - \tau_i)^2 \cdot\Var{c_t + \nu_t}\right).
\\
\end{array}
$$
In the above, the first step follows because $\tau_i$ as the cube geometric center is a constant. The last step is derived by assuming $\sum_{t \in T} (c_t + \nu_t)\approx C$, that is, the noisy cluster size is approximately equal to the original cluster size $C$.

We can see that within the cube, different cells' contribution to the variance is not the same.  Basically, the closer a cell is to the cube center, the less its contribution. The contribution is proportional to the squared distance to the cube center. We thus approximate the variance as follows:
\begin{eqnarray*}
\Var{\widehat{o_i}}&\approx &\frac{1}{C^2} \int_{-\frac{w}{2}}^{\frac{w}{2}}x^2 \left(\frac{M}{(2r)^d}w^{d-1}\frac{2}{\epsilon^2}\right) \mathrm{d} x
 \\
&= & \frac{2Mr^2}{3C^2\epsilon^2k^{\frac{d+2}{d}}}.
\end{eqnarray*}
In the above integral, $x$ in the first term is the distance from a cell center to the cube center (i.e., $t_i-\tau_i$). The second term $\frac{M}{(2r)^d}$ is the number of cells per unit volume, and $w^{d-1}$ is the volume of the $(d-1)$-dimensional plane that has a distance of $x$ to the cube center. The last term $\frac{2}{\epsilon^2}$ is the variance of the cell size (i.e., $\Var{c_t + \nu_t}$). Suppose that clusters are of equal size, that is, $C=\frac{N}{k}$. Then, the variance of the noisy centroid by summing all the $d$ dimensions  is
\begin{equation}\label{eqn:nonInter_variance}
\Var{\widehat{o}}\approx\frac{2dMr^2k^{\frac{d-2}{d}}}{3N^2\epsilon^2}
\end{equation}

The analysis shows that the variance of the \eugkm is proportional to $\frac{M}{(N\epsilon)^2}$.  \eugkm sets $M$ to $\left(\frac{N\epsilon}{10}\right )^{\frac{2d}{2+d}}$.  Plugging it into Equation~\ref{eqn:nonInter_variance}, we get that the variance of \eugkm is \emph{inversely proportional} to $\left(N\epsilon\right)^{\frac{4}{2+d}}$.

\mypara{Analyzing the bias}.
Let $x_i$ be the $i$'th dimension coordinate of a tuple $x$. Then, the bias of $\widehat{o_i}$ is
$$
\begin{array}{ll}
& \BI{\widehat{o_i}} = \EE{\widehat{o_i}}-o_i \vspace{1.5mm}
 \\
= & \EE{\frac{\sum_{t \in T} t_i (c_t + \nu_t)}{\sum_{t \in T} (c_t + \nu_t)}} - \frac{\sum_{t \in T} \sum_{x\in t} x_i }{\sum_{t \in T} c_t} \vspace{1.5mm}
 \\
\approx & \frac{\sum_{t \in T} \sum_{x\in t} (t_i-x_i) }{C},
\\
\end{array}
$$
where the last step is developed by approximating $\sum_{t \in T} (c_t + \nu_t)$ to the cluster size $C$.

The bias developed in the above formula is dependent on data distribution. Its precise estimation requires to access real data. We thus only estimate its upper bound. Let $q_i=t_i-x_i$. Non-interactive approach partitions each dimension into $\sqrt[d]{M}$ intervals of equal length. Hence, $q_i$ falls in the range of $[-\frac{r}{\sqrt[d]{M}}, \frac{r}{\sqrt[d]{M}}]$, and the upper bound of $\BI{\widehat{o_i}} $ is $\frac{r}{\sqrt[d]{M}}$. Summing all the $d$ dimensions, we obtain the upper bound of squared bias of noisy centroid
\begin{equation}\label{eqn:nonInter_bias2}
(\BI{\widehat{o}})^2\leq\frac{dr^2}{M^{\frac{2}{d}}}.
\end{equation}

The estimation shows that the upper bound of squared bias decreases as a function of $M^{\frac{2}{d}}$. This is consistent with the expectation. As $M$ increases, the data space is partitioned into finer-grained cells. Therefore, the distance between a tuple in a cell to the cell center decreases on average.

\mypara{Comparing \dpl and \eugkm.} We now analyze the performance of \dpl and \eugkm in Figure \ref{fig:non-interactive_VS_interactive}. Equation \ref{eqn:DPLMSE_approx} shows that the MSE of \dpl is inversely proportional to $(N\epsilon)^2$. The MSE of \eugkm consists of variance and squared bias. Plugging $M=\left(\frac{N\epsilon}{10}\right )^{\frac{2d}{2+d}}$ into Equation \ref{eqn:nonInter_variance} and Inequality \ref{eqn:nonInter_bias2},
it follows that the MSE of \eugkm is inversely proportional to $(N\epsilon)^{\frac{4}{2+d}}$. This explains why the NICV of \dpl, which is inversely proportional to $(N\epsilon)^2$ drops much faster than that of \eugkm as $\epsilon$ grows. It also explains why \dpl has better performance on `big' dataset (e.g., the TIGER dataset).

The MSE of \eugkm is inversely proportional to $(N\epsilon)^{\frac{4}{2+d}}$. Thus, it increases exponentially as a function of $d$. Instead, from Equation \ref{eqn:DPLMSE_approx}, it follows that the MSE of \dpl has only cubic growth with respect to $d$. Therefore, in Figure \ref{fig:non-interactive_VS_interactive}, as the dimensionality of dataset increases, \dpl outperforms \eugkm. This also explains in Figure \ref{fig:synthe-heatmap}  why \dpl is more scalable to $d$ than \eugkm.

\subsection{The Hybrid Approach}\label{ssec:hybrid}

Our hybrid approach combines \eugkm and \dpl. Given a dataset and privacy budget $\epsilon$, the hybrid approach first \mbox{checks} whether it overtakes the \dpl method and also the \eugkm method. If this is not the case, the hybrid approach simply falls back to \eugkm. Otherwise, the hybrid approach allocates half privacy budget to \eugkm to output a synopsis and find $k$ intermediary centroids that work well for the synopsis. Then, it runs \dpl for one iteration using the remaining half privacy budget to refine these $k$ centroids.

We use MSE to heuristically determine the conditions, on which the hybrid approach overtakes the \dpl method and also the \eugkm method. Basically, we require that the MSE of the hybrid approach be smaller than those of the other two approaches, since smaller MSE implies smaller error to the cluster centroid.
From Equation \ref{eqn:DPLMSE_approx}, it follows that the MSE of \dpl with full privacy budget is
\begin{equation}\label{formula:DPLopt_MSE_tround}
2d (1+(2\rho r)^2) \left(\frac{kt(dr+1)}{N \epsilon}\right)^2.
\end{equation}
A precise estimation of the MSE of the \eugkm method requires to access the dataset, since the bias depends on the real data distribution. However, we have the approximate variance (Equation \ref{eqn:nonInter_variance}) by setting $M=\left(\frac{N\epsilon}{10}\right )^{\frac{2d}{2+d}}$.
\begin{equation}\label{formula:eugkm_var}
\frac{2dr^2(k)^{\frac{d-2}{d}}}{3\times (10)^{\frac{2d}{2+d}}(N\epsilon)^{\frac{4}{2+d}}}
\end{equation}
One-iteration \dpl with half privacy budget outputs the final $k$ cluster centroids, if it is applied in the hybrid approach. Therefore, we approximate the MSE of the hybrid approach by that of the one-iteration \dpl
\begin{equation}\label{formula:dplmse_1round}
8d (1+(2\rho r)^2) \left(\frac{k(dr+1)}{N \epsilon}\right)^2,
\end{equation}
which is developed by setting $t=1$ and privacy budget to $0.5\epsilon$ in Equation \ref{eqn:DPLMSE_approx}.

Comparing Formulas \ref{formula:DPLopt_MSE_tround} and \ref{formula:dplmse_1round}, it follows that the MSE of the hybrid approach is lower than or equal to that of the \dpl if
\begin{equation}\label{eqn:alpha}
t\geq 2.
\end{equation}

Variance is the lower bound of MSE. Thus, if the MSE of the hybrid approach is equal to or smaller than the variance of the \eugkm method, then it is sure that the hybrid approach has lower MSE. Setting Formula \ref{formula:dplmse_1round} smaller than or equal to Formula \ref{formula:eugkm_var} yields
\begin{equation}\label{eqn:boundaryEpsilon}
\epsilon \geq \left(\frac{X}{Y}\right )^{\frac{2+d}{2d}},
\end{equation}
where
$$
X= 8d (1+(2\rho r)^2) \left(\frac{k(dr+1)}{N}\right)^2,
$$
and
$$
Y = \frac{2dr^2(k)^{\frac{d-2}{d}}}{3\times (10)^{\frac{2d}{2+d}}N^{\frac{4}{2+d}}}.
$$

Inequalities \ref{eqn:alpha} and \ref{eqn:boundaryEpsilon} give the conditions of applying the hybrid approach. Inequality \ref{eqn:alpha} is automatically satisfied since \dpl runs for $t=5$ iterations.

\subsection{Experimental results}

We now compare the hybrid approach with \eugkm and \dpl. The configuration for \eugkm and \dpl is the same as in Section \ref{sec:existingAppExpt}.  For the hybrid approach, we run \eugkm 10 times to output 10 sets of intermediate centroids.  Then we run \dpl 10 times on each intermediate result.  We finally report the average of 100 NICV values. Figure \ref{fig:hybrid} gives the results on the six external datasets. In low dimensional datasets (S1, Gowalla, TIGER, and Image), the hybrid approach simply falls back to \eugkm for small $\epsilon$ value. When $\epsilon$ increases, both the hybrid approach and \eugkm converge to the baseline with the former having slightly better performance. For example, in the Gowalla dataset for $\epsilon=0.7$, the average NICV of the hybrid approach is $0.02172$ and that of \eugkm is $0.02174$.

In higher dimensional datasets (Adult-num and Lifesci), the hybrid approach outperforms the other two approaches in most cases. It is worse than \dpl only for a few small $\epsilon$ values, on which it falls back to \eugkm. There are two possible reasons. The first is that the MSE analysis assumes that datasets are well clustered and each cluster has equal size, but the real datasets are skewed. For example, the baseline approach partitions the Adult-num dataset into 5 clusters, in which the biggest cluster contains 13,894 tuples and the smallest contains 3,160 tuples. The second is that we use the variance of \eugkm as the lower bound of its MSE. Thus, it is possible that the MSE of the hybrid approach (approximated by the MSE of one-iteration \dpl with half privacy budget) is larger than the variance of \eugkm, but actually smaller than its MSE. In such cases, the hybrid approach gives lower NICV if it does not fall back to \eugkm. For example, on the Adult-num dataset for $\epsilon=0.05$, the hybrid approach of falling back to \eugkm has the NICV of $0.370$, while its NICV is $0.244$, if it applies \eugkm plus one-iteration of \dpl.

\begin{figure*}[!htb]
	\begin{tabular}{ccc}
	\includegraphics[width = 2.2in]{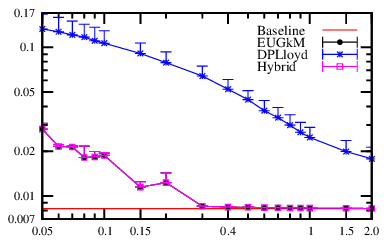} &
	\includegraphics[width = 2.2in]{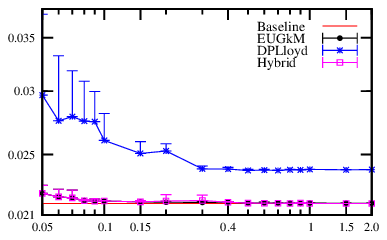} &
	\includegraphics[width = 2.2in]{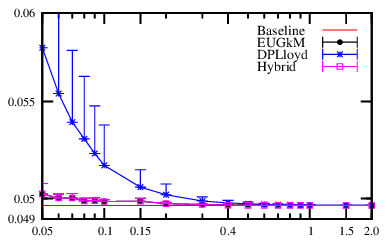}\\
	(a) S1 [$d = 2$, $k = 15$]  & (b) Gowalla [$d = 2$, $k = 5$] & (c) TIGER [$d = 2$, $k = 2$]
	\end{tabular}
	
	\begin{tabular}{ccc}
	\includegraphics[width = 2.2in]{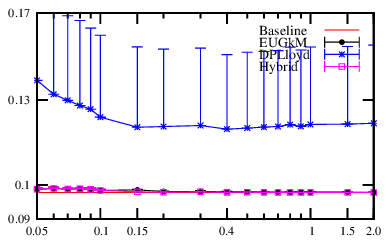}&
	\includegraphics[width = 2.2in]{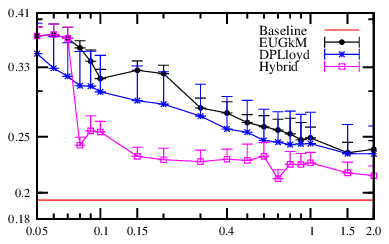}&
	\includegraphics[width = 2.2in]{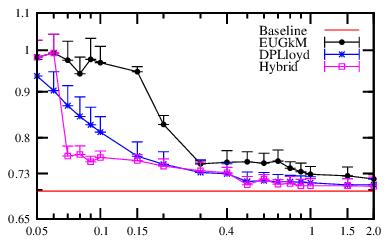}\\
	(d) Image [$d = 3$, $k = 3$] & (e) Adult-num [$d = 6$, $k = 5$] & (f) Lifesci [$d = 10$, $k = 3$]
	\end{tabular}
	
    \caption{The comparison of the Hybrid approach with EUGkM and \dpl. x-axis: privacy budget $\epsilon$ in log-scale. y-axis: NICV in log-scale.}\label{fig:hybrid}
\end{figure*}

We also evaluate the approaches using the synthetic datasets as generated in Section \ref{sec:existingAppExpt}. Figure~\ref{fig:heatmap-eugkm-hybrid} clearly shows that the hybrid approach is more scalable than \eugkm with respect to both $k$ and $d$. This confirms the effectiveness of the hybrid approach.

Figure~\ref{fig:running_time_dplloyd_eugkm} presents the runtime of \dpl and \eugkm on the six external datasets.  We follow the same experiment configuration as in Section~\ref{sec:existingAppExpt}.  As expected, the runtime of \dpl is much lower than that of \eugkm.  This is because \eugkm has to run \km clustering over 30 sets of initial centroids and output the centroids with the best NICV relative to the noisy synopsis.  Another reason is that \dpl sets the number of iterations to 5 while \eugkm runs \km clustering until converge.

\begin{figure}[h]
\begin{tabular}{cc}
\includegraphics[width = 1.8in]{eugkm_nicv.eps}&
\hspace{-0.8cm}\includegraphics[width = 1.8in]{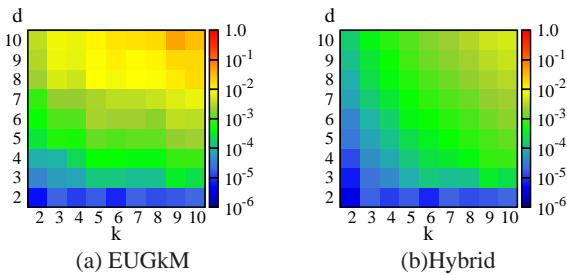}\\
(a) \eugkm & \hspace{-0.8cm} (b)Hybrid
\end{tabular}
\caption{Comparing hybrid and \eugkm by the heatmap}\label{fig:heatmap-eugkm-hybrid}
\end{figure}

\begin{figure}[h]
\includegraphics[width = 3.1in]{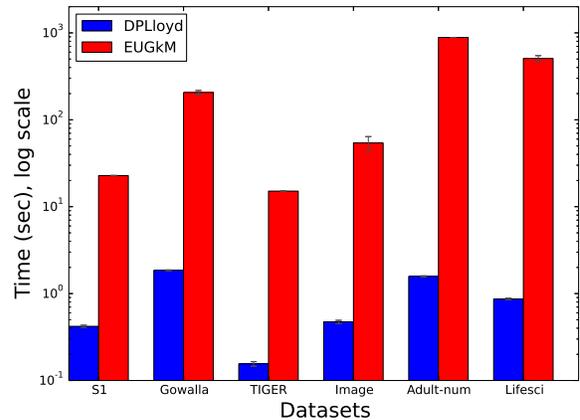}
\caption{Comparing running time between \dpl and \eugkm, $\epsilon = 0.1$}\label{fig:running_time_dplloyd_eugkm}
\end{figure}

\section{Conclusion and Discussions}\label{sec:conclusions}
We have improved the state of art on differentially private $k$-means clustering in several ways. We have introduced non-interactive methods for differentially private $k$-means clustering, and have extensively evaluated and analyzed three interactive methods and one non-interactive methods.  Our proposed EUGkM outperforms existing methods.  We have also introduced the novel concept of hybrid approach to differentially private data analysis, which is so far the best approach to $k$-means clustering.  

Concerning the question of non-interactive versus interactive, the insights obtained from $k$-means clustering are as follows.  
The non-interactive EUGkM has clear advantage, especially when the privacy budget $\epsilon$ is small.  Considering the further advantage that non-interactive methods enable other analysis on the dataset, we would tentatively conclude that non-interactive is the winner in this comparison.  We conjecture that this tradeoff will hold for many other data analysis tasks.  We plan to investigate whether this holds in other analysis tasks.
Also, if one's goal is to improve the accuracy of one $k$-means clustering task as much as possible, then hybrid approaches may be the most promising solution.


\bibliographystyle{abbrv}
\bibliography{privacy}

\end{document}